\newtheorem{theorem}{Theorem}
\newtheorem{lemma}{Lemma}
\newtheorem{example}{Example}
\newtheorem{remark}{Remark}
\newtheorem{assumption}{Assumption}
\DeclareMathOperator{\tr}{\textbf{tr}}
\DeclareMathOperator{\vect}{\textbf{vec}}
\title{\LARGE \bf
Towards Scalable Koopman Operator Learning: Convergence Rates and A Distributed Learning Algorithm
}
\author{Zhiyuan Liu, Guohui Ding, Lijun Chen\thanks{Z. Liu, G. Ding and L. Chen are with the Department of Computer Science, University of Colorado, Boulder, CO 80309, USA (emails: \{zhiyuan.liu, duohui.ding,  lijun.chen\}@colorado.edu).} and  Enoch Yeung \thanks{E. Yeung is with the Department of Mechanical Engineering, the Center for Control, Dynamical Systems, and the Center for Biological Engineering, University of California,  Santa Barbara, CA 93106, USA (email: eyeung@ucsb.edu).}
}
\begin{document}

\maketitle
\thispagestyle{empty}
\pagestyle{empty}

\begin{abstract}
We propose an alternating optimization algorithm to the nonconvex Koopman operator learning problem for nonlinear dynamic systems. We show that the proposed algorithm will converge to a critical point with rate $O(1/T)$ and $O(\frac{1}{\log T})$ for the constant and diminishing learning rates, respectively, under some mild conditions. To cope with the high dimensional nonlinear dynamical systems, we present the first-ever distributed Koopman operator learning algorithm. We show that the distributed Koopman operator learning has the same convergence properties as  the centralized Koopman operator learning, in the absence of optimal tracker, so long as the basis functions satisfy a set of state-based decomposition conditions. Numerical experiments are provided to complement our theoretical results.
\end{abstract}

\section{Introduction}\label{sect:intro}
There is an increasing interest in recent years in transferring the operator theoretic techniques such as the Koopman operator \cite{koopman1931hamiltonian,mezic2005spectral} to  the analysis of dynamical systems. Such operator based methods differ from  the classical approaches, in that they define the evolution of observable functions in a function space rather than using state vectors in the state space. The power of these operator theoretic methods is that it provides linear representations of nonlinear time-invariant systems, albeit in higher dimensional spaces that are countable or uncountable.  Various numerical approaches, such as dynamic mode decomposition(DMD), Hankel-DMD, extended dynamic mode decomposition (E-DMD) and structured dynamic mode decomposition (S-DMD),  have been proposed for discovering the Koopman operator of a nonlinear system, using a series of dictionary functions with spanning or universal function approximation properties \cite{li2017extended,kutz2016multiresolution,mezic2005spectral, rowley2009spectral, sinha2019computation}.   
Researchers have recently shown that it is possible to integrate machine-learned representations with dynamic mode decomposition algorithms, using variational autoencoders to achieve phase-dependent representations of spectra \cite{lusch2018deep} or delay embeddings \cite{takeishi2017learning}, shallow neural networks \cite{li2017extended}, linearly recurrent neural networks for balancing expressiveness and overfitting \cite{otto2019linearly}, and deep RELU feedforward networks for predictive modeling in biological and transmission systems \cite{yeung2019learning}.  
E-DMD \cite{li2017extended} and Deep-DMD \cite{yeung2019learning}  have been utilized in various domains, including nonlinear system identification \cite{johnson2018class,liu2018decomposition,mehta2005stochastic,yeung2019learning}, image processing \cite{kutz2016multiresolution,kutz2015multi} and robotic control \cite{abraham2017model,berger2015estimation}.

Generally speaking, the learning especially the training phase of the Koopman operator tries to minimize the empirical loss based on the training set, e.g., the data sampled from the real trajectory of dynamic system. Compared to the traditional machine learning problem which learns the unknown mapping from input to output, the Koopman learning has two tasks: 1) Learning the function space that lifts state space to a high even infinite dimensional space, and 2) learning a linear mapping within that function space. These two tasks are highly intertwined, e.g., inappropriate function space learned will lead to poor learning performance even the linear mapping is perfect. However, to the best of our knowledge, the method of Koopman training has not gotten enough attention up to now. Another challenge is that, when parameterized function approximation such as neural network is used,  the learning problem is nonconvex. For instance, even for a single layer neural network, it is NP-complete to find the global optimal \cite{blum1989training}. However, recent work \cite{soltanolkotabi2018theoretical,boob2017theoretical,choromanska2015loss} shows that for over-parameterized (wide) shallow neural networks, local optima provide satisfactory performance.  Specifically, they show that every local optimum is global optimum if the hidden layer is non-singular; every local minimum of the simplified objective is close to the global minimum. In this paper, we contribute a proof of convergence for Koopman learning algorithms utilizing shallow neural networks, and derive conditions for first-order optimality, the properties of the so-called dictionary functions used in deep and E-DMD that guarantee convergence. We propose alternating optimization algorithm with an optimal tracker for training the Koopman operator. By proving the objective function's  smoothness property, we show that our algorithm admits $O(1/T)$ convergence rate for chosen constant learning rate and $O(1/\log T)$ for diminishing learning rate. We illustrate convergence of the alternating optimization algorithm for single-node training (non-distributed) on two  nonlinear systems with oscillatory dynamics.

A second major contribution of this paper is the development of a distributed Koopman operator learning algorithm.  Most Koopman operator learning algorithms operate under the assumption of full-state measurements.  However, in engineered and natural systems represented by data, full-state measurements are often not available, or are too expensive to collect.  For example, power distribution networks consisting of hundreds of thousands of nodes exhibit real-time dynamics on systems that are poorly modeled, calibrated, or dated.  Biological networks operate on thousands of genes to generate transcriptomic reponse profiles as a function of time; full-state measurement via deep sequencing is prohibitively expensive.  In many instances, it is much more feasible to collect measurements from select locations, via strategic placement of observers \cite{hasnain2019optimal}, which gives rise to a different form of data --- time-series data that is spatially distributed or fragmented across the whole network.  We address the challenge of training distributed representations of Koopman operators and develop a {\it distributed} Koopman learning algorithm, proving its asymptotic convergence, and illustrating predictive accuracy and convergence on several simulated examples. 

The  rest  of  the  paper  is  organized  as  follows.  Section  II introduces the Koopman operator learning problem. Section  III describes our alternating optimization algorithm for the Koopman learning and proves the convergence. Section   IV extends the algorithm to a distributed setting and shows its convergence. Section V evaluates the performance of the  two algorithms on two nonlinear systems. Section  VI concludes the paper.

\section{Koopman operator Learning Problem} \label{sect:koopman}
We consider a discrete time open-loop nonlinear dynamic system of the following form: 
	\begin{align}
		x_{n+1} &= f(x_{n}),  \label{equ:1}\\
		y_n &= h(x_n),
	\end{align}
where $f: \mathbb{R}^d \rightarrow \mathbb{R}^{d} $ and $h: \mathbb{R}^{d} \rightarrow \mathbb{R}^{p}$ are continuously differentiable. The function $f$ is the state-space model and the function $h$ maps current state $x_n \in \mathbb{R}^{d}$ to a vector of observables or output $y_n \in \mathbb{R}^{p}$.  The Koopman operator $\mathcal{K}$ of system 
\eqref{equ:1}, if it exists, is a linear operator that acts on observable functions $\psi(x_k)$ and forward propagates them in time. To be more specific, the Koopman operator for this system must satisfy the equations: 
\begin{align}
    \psi(x_{n+1}) &= \mathcal{K}(\psi(x_n)), \\
		y_n &= \mathcal{H}(\psi(x_n)),
\end{align}
where $\psi(x_n) = [\psi_1(x_n),\cdots,\psi_m(x_n)]^{\top}: \mathbb{R}^{d} \rightarrow \mathbb{R}^{m} (m \leq \infty)$ is a basis function that defines the lifted space of observables and $\mathcal{K} \in \mathbb{R}^{m\times m}$ is a constant matrix. Based on the Koopman operator theory, $\psi$ is the basis function of observables under which $\psi(x_n)$ is $\mathcal{K}$-invariant for all $n$. This implies that the Koopman operator comprehensively captures the flow of the observable trajectory $(x_1,x_2,\cdots)$. 

Based on the data-driven method \cite{yeung2019learning} \cite{williams2015data}, a general model for approximating Koopman operator given the data trajectory $(x_i,x_{i+1}), i \in \{1,\cdots,N\}$ can be formulated as follows:
	\begin{align}\label{eq:lm}
		\min_{\psi,\mathcal{K}}~ \mathcal{D}(\psi,\mathcal{K}):=\frac{1}{2N}\sum_{i=1}^{N}\|\psi(x_{i+1}) - \mathcal{K}\psi(x_i)\|_2^2.
	\end{align}
The above model aims to minimize the empirical loss from the learning perspective. One can slightly change the objective function by adding certain regularized term, e.g., $\|\mathcal{K}\|_1$ for sparse operator or $\|\mathcal{K}\|_2$ for avoiding large training lost, to make the tradeoff between the training and generalization errors.

While there has been a surge of interest in using neural networks to perform Koopman learning, little is known regarding the convergence and numerical stability of the training process. This motivates us to investigate the property of Koopman learning during its training phase. There are two challenges in solving  the optimization problem \eqref{eq:lm} in practice. First, the basis function $\psi$ is unknown. This makes it difficult to ascertain what functions and how many functions to include, let alone the minimal number of functions, to ensure $\mathcal{K}$-invariant. Recently, EDMD \cite{williams2015data} uses an expansive set of orthonomal polynomial basis functions, but this approach does not scale well and suffers from overfitting with an increasing number of basis functions. Deep-DMD \cite{yeung2019learning} adopts the neural networks to approximate the basis function based on universal approximation theorem, but it lacks the theoretical guarantee, e.g., the stability and convergence. Second, the objective function is nonconvex. Therefore it is unrealistic to expect an algorithm to converge to global minima. 

Here we focus on the basis function based on parametric method. Specifically, we redefine $\psi(x_n)= \psi(\mathcal{W}x_n)$. The term $\mathcal{W}x_n$ means matrix product. A typical example is a fully connected one-layer neural network since for wide shallow neural network, local optima provide satisfactory under some mild conditions \cite{soltanolkotabi2018theoretical,boob2017theoretical,choromanska2015loss}, where $\mathcal{W}$ is the layer parameter and $\psi$ is activation function. With the parametric basis method, problem \eqref{eq:lm} becomes
\begin{equation}
	\begin{aligned}
		\min_{\mathcal{W},\mathcal{K}}~ \mathcal{F}(\mathcal{W},\mathcal{K}):=\frac{1}{2N}\sum_{i=1}^{N}\|\psi(\mathcal{W}x_{i+1}) - \mathcal{K}\psi(\mathcal{W}x_i)\|_2^2.
	\end{aligned}
\end{equation}
Although this problem is nonconvex, there are some interesting structures. For example, if we fix the parameter $\mathcal{W}$ of the basis function, optimizing $\mathcal{K}$ is a quadratic problem that finds the linear mapping from $\mathbb{R}^{m}$ to $\mathbb{R}^{m}$. On the other hand, with fixed $\mathcal{K}$, optimizing $\mathcal{W}$ is to adjust the parameter $\mathcal{W}$ to find the function space that satisfies the linear transformation mapping (this is still highly nonconvex but will reduce the complexity a lot). We thus consider the algorithm that alternatively optimize over $\mathcal{W}$ and $\mathcal{K}$. 

\section{Alternating Optimization Algorithm}\label{sect:alternating}

In this section, we first state our alternating algorithm and then investigate its convergence properties. Let $\mathcal{F}_{t} = \mathcal{F}(\mathcal{W}^{t},\mathcal{K}^{t})$ and  denote by $\|\cdot\|_{F}$ the Frobenius norm. The detail of the algorithm is shown in Algorithm \ref{algo:0}. Here $\mathcal{E}$ measures how far the gradient is from that at the critical point and $\mathcal{K}^{*},\mathcal{W}^{*}$ track the best parameters so far. 
We make the following assumptions.

\begin{algorithm}
	\caption{Alternating Operator Koopman Learning With Tracking}\label{algo:0}
	\textbf{Initialization:}
	randomly initialize $\mathcal{W}^{0}$ and $\mathcal{K}^{0}$, $\mathcal{E}^{0} = \|\nabla_{\mathcal{K}}\mathcal{F}_{0}\|_{F} +\|\nabla_{\mathcal{W}}\mathcal{F}_{0}\|_{F} $,$\mathcal{W}^{*} = \mathcal{W}^{0}$,$\mathcal{K^{*}} = \mathcal{K}^{0}$. \\
	\While{Not Converge}{ 
		$\mathcal{K}^{t+1} ~= \mathcal{K}^{t} - \eta_{\mathcal{K}} \nabla_{\mathcal{K}} \mathcal{F}(\mathcal{W}^t, \mathcal{K}^t),$ \\
		$\mathcal{W}^{t+1} = \mathcal{W}^{t} - \eta_{\mathcal{W}} \nabla_{\mathcal{W}} \mathcal{F}(\mathcal{W}^t, \mathcal{K}^{k+1}).$\\
		$\mathcal{E}^{t+1} ~= \|\nabla_{\mathcal{K}}\mathcal{F}_{t+1}\|_{F}+\|\nabla_{\mathcal{W}}\mathcal{F}_{t+1}\|_{F}.$ \\
		\If{$\mathcal{E}^{t+1} \leq \mathcal{E}^{t} $
		}{
			$\mathcal{K}^{*} = \mathcal{K}^{t+1};$
			$\mathcal{W}^{*} = \mathcal{W}^{t+1}$  	
		}
	}	
\end{algorithm}
\begin{assumption}\label{Assumption:1}
	The function $\psi(\cdot)$ is bounded and has a bounded gradient and Hessian.
\end{assumption}
\begin{assumption} \label{Assumptipon:2}
	The parameters $\mathcal{K}$ and $\mathcal{W}$ are bounded, i.e., there exist two constant $U_{\mathcal{K}}$ and $U_{\mathcal{W}}$ such that $\|\mathcal{K}\|_{F} \leq U_{\mathcal{K}}$ and $\|\mathcal{W}\|_{F} \leq U_{\mathcal{W}}$.
\end{assumption}

	Assumption \ref{Assumption:1} looks strong.  However, it holds for several popular activation functions such as logistic function ($\frac{1}{1+e^{-x}}$), hyperbolic tangent ($\tanh(x)$), and inverse hyperbolic tangent ($\arctan(x))$. By Assumptions \ref{Assumption:1} and \ref{Assumptipon:2}, one can verify that the objective function $\mathcal{F}$ is bounded, i.e., there exists a constant $R$ such that $\mathcal{F} \leq R$. 
	We can show that $\mathcal{F}$ has Lipschitz-continuous gradient with respect to the parameter $\mathcal{W}$ of basis functions. 
	\begin{lemma} \label{lemma:1}
	Under Assumptions \ref{Assumption:1}  and \ref{Assumptipon:2} and given the data trajectory $\{(x_i,x_{i+1})\}_{i=1}^{N}$,  we have 
	\begin{align}
		\|\nabla_{\mathcal{W}} \mathcal{F}(\mathcal{W}^1,\mathcal{K}) - \nabla_{\mathcal{W}} \mathcal{F}(\mathcal{W}^2,\mathcal{K}) \|_{F} \leq L_{\mathcal{W}}\|\mathcal{W}^{1} - \mathcal{W}^2\|_{F}  \nonumber
	\end{align}
	with 
	\begin{align}
		L_{\mathcal{W}} =  \sqrt{2d}U_{\mathcal{K}} L_{\Psi} \frac{\sum_{i=1}^{N}\|x_i\|_2 \Delta_i}{N}, \nonumber
	\end{align}
	 where $\Delta_i =  \sqrt{(1+dU_\mathcal{K}^2)\|x_i\|^2_2 + \|x_{i+1}\|_2^2}$ and $L_{\Psi}$ is the Lipschitz constant for the function $\Psi(x_1,x_2):= \psi(x_1)\psi'(x_2)$.
\end{lemma}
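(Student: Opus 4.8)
The plan is to derive $\nabla_{\mathcal{W}}\mathcal{F}$ in closed form and then bound the difference of gradients at $\mathcal{W}^1$ and $\mathcal{W}^2$ directly, using the Lipschitz property of the auxiliary function $\Psi$ to absorb every nonlinear factor. Writing $z_i = \mathcal{W}x_i$, $r_i = \psi(\mathcal{W}x_{i+1}) - \mathcal{K}\psi(\mathcal{W}x_i)$, and using that $\psi$ acts coordinatewise so that the Jacobian of $\psi(\mathcal{W}x)$ in $z$ is $\mathrm{diag}(\psi'(z))$, the chain rule yields
$$\nabla_{\mathcal{W}}\mathcal{F} = \frac{1}{N}\sum_{i=1}^N\Big[\big(r_i\odot\psi'(z_{i+1})\big)x_{i+1}^\top - \big((\mathcal{K}^\top r_i)\odot\psi'(z_i)\big)x_i^\top\Big],$$
where $\odot$ is the coordinatewise product. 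The key observation I would make is that after substituting $r_i$ and expanding coordinatewise, every scalar factor in this matrix is of the form $\psi(\cdot)\psi'(\cdot)$ evaluated at two (possibly equal) preactivations---that is, exactly $\Psi(\cdot,\cdot)$---multiplied by an entry of $\mathcal{K}$ or of $\mathcal{K}^\top\mathcal{K}$ and by a coordinate of $x_i$ or $x_{i+1}$.

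Next I would subtract the two gradients term by term. Since $\mathcal{K}$ and the data are fixed and the outer factors $x_i^\top,x_{i+1}^\top$ do not depend on $\mathcal{W}$, each scalar difference reduces to $\Psi(\mathcal{W}^1 a,\mathcal{W}^1 b)-\Psi(\mathcal{W}^2 a,\mathcal{W}^2 b)$ for the relevant $a,b\in\{x_i,x_{i+1}\}$. The Lipschitz property of $\Psi$ bounds this by $L_\Psi$ times the change in its arguments, and a Cauchy--Schwarz step converts $\|(\mathcal{W}^1-\mathcal{W}^2)a\|_2\le\|\mathcal{W}^1-\mathcal{W}^2\|_F\,\|a\|_2$. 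Here Assumption \ref{Assumption:1} is what guarantees $\Psi$ is well defined with a finite $L_\Psi$ (boundedness of $\psi,\psi',\psi''$), while Assumption \ref{Assumptipon:2} lets me replace each entry of $\mathcal{K}$ by $U_{\mathcal{K}}$ and each entry of $\mathcal{K}^\top\mathcal{K}$ by a $U_{\mathcal{K}}^2$-type bound.

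Finally I would collect constants. Using $\|v x^\top\|_F = \|v\|_2\|x\|_2$ for the outer products and applying the triangle inequality over the two gradient terms and over coordinates, the factor $U_{\mathcal{K}}$ arises from the single $\mathcal{K}$ surviving in each term, the $\sqrt{d}$ and $\sqrt{2d}$ from summing coordinatewise bounds and converting back to Frobenius/Euclidean norms, and the combination $(1+dU_{\mathcal{K}}^2)\|x_i\|_2^2+\|x_{i+1}\|_2^2$ under the root in $\Delta_i$ from grouping the contributions whose argument change is carried by $x_i$ (those weighted by the $\mathcal{K}^\top\mathcal{K}$, hence $dU_{\mathcal{K}}^2$, factor) with those carried by $x_{i+1}$. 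Averaging over the $N$ samples then delivers $L_{\mathcal{W}}=\sqrt{2d}\,U_{\mathcal{K}}L_\Psi\,\frac{1}{N}\sum_{i=1}^N\|x_i\|_2\,\Delta_i$.

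I expect the main obstacle to be bookkeeping rather than any single hard inequality: keeping the tensor structure of $\nabla_{\mathcal{W}}\mathcal{F}$ straight as a gradient with respect to a matrix variable, verifying that after substituting $r_i$ every nonlinear factor genuinely collapses to the $\Psi$ form, and then tallying the dimension factors ($d$ versus $\sqrt{d}$ versus $\sqrt{2d}$) together with the correct powers of $U_{\mathcal{K}}$ so that the per-coordinate, per-sample estimates aggregate into exactly the stated $\Delta_i$ and $L_{\mathcal{W}}$. Conceptually the argument is light---one explicit chain-rule computation plus one application of the Lipschitz bound for $\Psi$---so the difficulty lies entirely in the careful grouping of the $x$-norm and $\mathcal{K}$-power factors to match the claimed constant.
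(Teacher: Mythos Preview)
Your plan is essentially the paper's proof: compute $\nabla_{\mathcal{W}}\mathcal{F}$ explicitly, observe that every nonlinear factor is a value of $\Psi(\cdot,\cdot)=\psi(\cdot)\psi'(\cdot)$, apply the Lipschitz bound $L_\Psi$ coordinatewise, pass from $\|(\mathcal{W}^1-\mathcal{W}^2)x\|_2$ to $\|\mathcal{W}^1-\mathcal{W}^2\|_F\|x\|_2$ by Cauchy--Schwarz, and tally constants. The paper organises this by writing $\nabla_{\mathcal{W}}\mathcal{F}=-\tfrac1N\sum_i\mathcal{K}^\top\alpha_i^{\mathcal{W}}x_i^\top$ with $\alpha_i^{\mathcal{W}}=r_i\odot\psi'(\mathcal{W}x_i)$ and then splitting $\alpha_i^{\mathcal{W}^1}[j]-\alpha_i^{\mathcal{W}^2}[j]=\beta_j^i-\gamma_j^i$ according to the two summands of $r_i$; that is exactly your ``every scalar factor collapses to $\Psi$'' step.

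The one substantive discrepancy is in the gradient itself. Your chain-rule expression keeps \emph{two} outer-product pieces, $(r_i\odot\psi'(z_{i+1}))x_{i+1}^\top$ from differentiating $\psi(\mathcal{W}x_{i+1})$ and $-((\mathcal{K}^\top r_i)\odot\psi'(z_i))x_i^\top$ from differentiating $\mathcal{K}\psi(\mathcal{W}x_i)$. The paper's displayed gradient retains only the second piece. The specific constant in the statement---the single outer factor $\|x_i\|_2$, the global $U_{\mathcal{K}}$ pulled from $\|\mathcal{K}\|_F$, and the form of $\Delta_i$---is tailored to that one-term expression, not to yours. If you carry both terms through the identical bookkeeping you will still get a finite Lipschitz constant (so the lemma holds qualitatively), but you should expect an extra summand with $\|x_{i+1}\|_2$ as the outer factor and a contribution with no leading $\mathcal{K}$, so the stated $L_{\mathcal{W}}$ will not fall out exactly. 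In short: same method, but to reproduce the paper's constant you have to start from the paper's (truncated) gradient; starting from the full gradient your approach yields a correct but larger $L_{\mathcal{W}}$.
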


\begin{proof}
	First, denote by $\mathcal{K}[:,i]$ the $i$-th column of matrix $\mathcal{K}$, $\mathcal{W}_j$ the $j$-th row of matrix $\mathcal{W}$,  and $x_i[k]$ the $k$-th dimension of $x_i$.  We can compute the element $[j,k]$ of $\nabla_\mathcal{W} \mathcal{F}(\mathcal{W},\mathcal{K})$ as:
	\begin{equation*}
	\begin{aligned}
	&\nabla_\mathcal{W} \mathcal{F}(\mathcal{W},\mathcal{K})[j,k] \\
	=& \frac{1}{N}\sum_{i=1}^{N} -(\psi(\mathcal{W}x_{i+1})-\mathcal{K} \psi(\mathcal{W}x_i))^T\mathcal{K}[:,j]\psi'(\mathcal{W}x_i)x_i[k],
	\end{aligned}
	\end{equation*}
	and $\nabla_\mathcal{W} \mathcal{F}(\mathcal{W},\mathcal{K})$ as: 
	\begin{equation*}
	\begin{aligned}
	&\nabla_\mathcal{W} \mathcal{F}(\mathcal{W},\mathcal{K}) \\
	=& -\frac{1}{N}\sum_{i=1}^{N}\mathcal{K}^T \underbrace{(\psi(\mathcal{W}x_{i+1}) - \mathcal{K} \psi(\mathcal{W}x_i)) \odot \psi'(\mathcal{W}x_i)}_{\alpha_i^{\mathcal{W}}}x_i^T\\
	=& -\frac{1}{N}\sum_{i=1}^{N}\mathcal{K}^T \alpha_i^{\mathcal{W}} x_i^T,
	\end{aligned}
	\end{equation*}
	where $\odot$ denotes the element-wise production. We can then write the gradient difference with respect to $\mathcal{W}^1$ and $\mathcal{W}^2$ as
	\begin{equation*}
	\begin{aligned}
	&\| \nabla_\mathcal{W} \mathcal{F}(\mathcal{W}^1, \mathcal{K}) - \nabla_\mathcal{W} \mathcal{F}(\mathcal{W}^2, \mathcal{K}) \|_F\\
	=& \frac{1}{N} \sum_{i=1}^{N}\|\mathcal{K}^T(\alpha_i^{\mathcal{W}^1}-\alpha_i^{\mathcal{W}^2})x_i^T\|_F\\
	\leq & \frac{1}{N} \sum_{i=1}^{N} \|\mathcal{K}\|_F\|(\alpha_i^{\mathcal{W}^1}-\alpha_i^{\mathcal{W}^2})x_i^T\|_F \\
	\leq & \frac{1}{N} \sum_{i=1}^{N} \|\mathcal{K}\|_F\|\alpha_i^{\mathcal{W}^1}-\alpha_i^{\mathcal{W}^2} \|_2 \|x_i\|_2.
	\end{aligned}
	\end{equation*}
	So if we can show that $\alpha_i^{\mathcal{W}}$ is Lipschitz-continuous, the proof is done. We have
	\begin{equation*}
	\begin{aligned}
	&\alpha_i^{\mathcal{W}^1}[j] - \alpha_i^{\mathcal{W}^2}[j] \\
	=& \underbrace{\left(\psi(\mathcal{W}_j^1x_{i+1})\psi'(\mathcal{W}_j^1x_i)-\psi(\mathcal{W}_j^2x_{i+1})\psi'(\mathcal{W}_j^2x_i)\right)}_{\beta_j^i} \\
	& ~~- \Big(\underbrace{\mathcal{K}_j \psi(\mathcal{W}^1x_i)\psi'(\mathcal{W}_j^1x_i)-\mathcal{K}_j \psi(\mathcal{W}^2x_i)\psi'(\mathcal{W}_j^2x_i)}_{\gamma_j^i}\Big).
	\end{aligned}
	\end{equation*} 
	Consider function $\Psi(x_1,x_2) = \psi(x_1)\psi'(x_2)$ in $x_1,x_2\in \mathbb{R}$ and its gradient $\nabla \Psi(x_1,x_2) = \left[\begin{matrix}
	\psi'(x_1)\psi'(x_2)\\
	\psi(x_1)\psi''(x_2)
	\end{matrix} \right].$ By Assumption \ref{Assumption:1}, $\| \nabla \Psi(\cdot) \|_2$ is bounded by some constant, denoted by $L_{\Psi}$.
	Let $\beta^i = [\beta_1^i, \cdots, \beta_n^i]^T$, we can bound $\beta^i$ as follows:
	\begin{equation*}
	\begin{aligned}
	\|\beta^i\|_2^2 &=\! \|\psi(\mathcal{W}^1x_{i+1}) \odot \psi'(\mathcal{W}^1x_i) \!-\! \psi(\mathcal{W}^2x_i) \odot \psi'(\mathcal{W}^2x_i) \|_2^2 \\
	& \leq 2L_{\Psi}^2\Big(\|\mathcal{W}^1x_{i+1} - \mathcal{W}^2x_{i+1}\|_2^2 + \|\mathcal{W}^1x_i - \mathcal{W}^2x_i\|_2^2 \Big) \\
	& \leq 2L_{\Psi}^2 (\|x_i\|_2^2 + \|x_{i+1}\|_2^2)\|\mathcal{W}^1-\mathcal{W}^2\|_F^2, 
	\end{aligned}
	\end{equation*}       
	where the last inequality is due to Cauchy-Schwarz inequality.
	Similarly, we can bound $\gamma^i$ as follows:
	\begin{equation*}
	\begin{aligned}
	\|\gamma^i \|_2^2 &\!\!\leq \sum_{j=1}^{n} \|\mathcal{K}_j \|_2^2 L_{\Psi}^2\! \Big(\!\sum_{k=1}^{d}(\mathcal{W}_k^1x_i\!-\!\mathcal{W}_k^2x_i)^2 \!+\! (\mathcal{W}_j^1x_i\!-\!\mathcal{W}_j^2x_i)^2\!\Big)\\
	& \!\!\leq \sum_{j=1}^{d} \|\mathcal{K}_j \|_2^2 L_{\Psi}^2 \Big(\|\mathcal{W}^1x_i-\mathcal{W}^2x_i\|_2^2 \!+\! d(\mathcal{W}_j^1x_i\!-\!\mathcal{W}_j^2x_i)^2\!\Big)  \\
	& \!\!\leq (dU_{\mathcal{K}}^2 + d^2\|\mathcal{K}_j^{\max}\|_2^2) L_{\Psi}^2\|\mathcal{W}^1x_i-\mathcal{W}^2x_i\|_2^2\\
	& \!\! \leq (dU_{\mathcal{K}}^2 + d^2\|\mathcal{K}_j^{\max}\|_2^2) L_{\Psi}^2 \|x_i\|_2^2 \|\mathcal{W}^1-\mathcal{W}^2\|_F^2 \\
	& \!\! \leq 2dU_{\mathcal{K}}^2 L_{\Psi}^2 \|x_i\|_2^2 \|\mathcal{W}^1-\mathcal{W}^2\|_F^2, \\
	\end{aligned}
	\end{equation*}
   where the second inequality is by  Assumption \ref{Assumptipon:2}.
	Combining the above results, we have 
	\begin{equation*}
	\begin{aligned}
	&\| \nabla_\mathcal{W} \mathcal{F}(\mathcal{W}^1, \mathcal{K}) - \nabla_\mathcal{W} \mathcal{F}(\mathcal{W}^2, \mathcal{K}) \|_F\\
	\leq & \frac{1}{N} \sum_{i=1}^{N}\|\mathcal{K}\|_F\|\alpha_i^{\mathcal{W}^1}-\alpha_i^{\mathcal{W}^2} \|_2 \|x_i\|_2\\
	\leq & \sqrt{2d}\|\mathcal{K}\|_FL_{\Psi} \frac{\sum_{i=1}^{N}\|x_i\|_2 \Delta_i}{N}\|\mathcal{W}^1-\mathcal{W}^2\|_F, \\
	\end{aligned}
	\end{equation*}
	where $\Delta_i = \sqrt{(1+dU_{\mathcal{K}}^{2})\|x_i\|^2_2 + \|x_{i+1}\|_2^2}$. 
\end{proof}

Similarly, $\mathcal{F}$ has Lipschitz-continuous gradient with respect to  the parameter $\mathcal{K}$ of the linear mapping. 
\begin{lemma} \label{lemma:2}
	Under Assumption \ref{Assumption:1} and assume that the basis function is bounded by $h$, we have 
	\begin{equation*}
	\|\nabla_{\mathcal{K}} \mathcal{F}(\mathcal{W}, \mathcal{K}^1) - \nabla_{\mathcal{K}} \mathcal{F}(\mathcal{W}, \mathcal{K}^2) \|_F \leq L_{\mathcal{K}} \|\mathcal{K}^1 - \mathcal{K}^2 \|_F
	\end{equation*}
	with  $L_{\mathcal{K}}=dh^2$.
\end{lemma}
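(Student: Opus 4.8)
The plan is to exploit the fact that, unlike $\mathcal{W}$, the variable $\mathcal{K}$ enters the residual $\psi(\mathcal{W}x_{i+1}) - \mathcal{K}\psi(\mathcal{W}x_i)$ \emph{affinely}, so that $\nabla_{\mathcal{K}}\mathcal{F}$ is itself affine in $\mathcal{K}$ and the Lipschitz estimate collapses to bounding a single linear operator. First I would differentiate the quadratic objective term by term. For fixed $\mathcal{W}$, writing $b_i := \psi(\mathcal{W}x_i)$ and $a_i := \psi(\mathcal{W}x_{i+1})$, each summand $\tfrac12\|a_i - \mathcal{K}b_i\|_2^2$ has $\mathcal{K}$-gradient $-(a_i - \mathcal{K}b_i)b_i^{\top}$, so that
\[
\nabla_{\mathcal{K}}\mathcal{F}(\mathcal{W},\mathcal{K}) = -\frac{1}{N}\sum_{i=1}^{N}\bigl(a_i - \mathcal{K}b_i\bigr)b_i^{\top}.
\]

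Next I would form the gradient difference at $\mathcal{K}^1$ and $\mathcal{K}^2$. The data-dependent term $a_i b_i^{\top}$ does not involve $\mathcal{K}$ and cancels, leaving the clean identity
\[
\nabla_{\mathcal{K}}\mathcal{F}(\mathcal{W},\mathcal{K}^1) - \nabla_{\mathcal{K}}\mathcal{F}(\mathcal{W},\mathcal{K}^2) = \frac{1}{N}\sum_{i=1}^{N}\bigl(\mathcal{K}^1 - \mathcal{K}^2\bigr)\,b_i b_i^{\top}.
\]
This is the crux of the argument: the difference factors through $\mathcal{K}^1 - \mathcal{K}^2$ exactly, with no residual-dependent remainder, which is precisely why the proof is far shorter than that of Lemma \ref{lemma:1}. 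I would then take Frobenius norms, apply the triangle inequality across the sum, and use the submultiplicative bound $\|XY\|_F \leq \|X\|_F\|Y\|_2$ with $X = \mathcal{K}^1 - \mathcal{K}^2$ and $Y = b_i b_i^{\top}$. Since $b_i b_i^{\top}$ is rank one, its spectral norm is exactly $\|b_i\|_2^2$, so each term contributes $\|\mathcal{K}^1 - \mathcal{K}^2\|_F\,\|b_i\|_2^2$. Finally, the hypothesis that the basis function is bounded by $h$ in each coordinate gives $\|b_i\|_2^2 = \sum_{j}\psi_j(\mathcal{W}x_i)^2 \leq dh^2$ uniformly in $i$; averaging over $i$ then produces $L_{\mathcal{K}} = dh^2$.

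There is no genuine obstacle here beyond bookkeeping, and the step I expect to require the most care is the choice of matrix-norm inequality: using the $\|XY\|_F \leq \|X\|_F\|Y\|_2$ form (rather than $\|Y\|_F$) is what keeps the $\|\mathcal{K}^1 - \mathcal{K}^2\|_F$ factor exact after extracting the rank-one spectral norm $\|b_i\|_2^2$. I would also note that, as in the proof of Lemma \ref{lemma:1}, the count of basis coordinates is taken to be $d$, so that the per-sample bound $\|b_i\|_2^2 \leq dh^2$ reproduces the stated constant.
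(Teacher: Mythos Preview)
Your proposal is correct and follows essentially the same route as the paper: compute $\nabla_{\mathcal{K}}\mathcal{F}$, observe that the $a_i b_i^{\top}$ term cancels in the difference so that only $(\mathcal{K}^1-\mathcal{K}^2)b_i b_i^{\top}$ remains, and then bound via submultiplicativity and the entrywise bound $|\psi_j|\le h$. The only cosmetic difference is that the paper keeps the sum $\sum_i b_i b_i^{\top}$ intact and bounds its Frobenius norm by $\sqrt{d^2(Nh^2)^2}=Ndh^2$ (using $\|XY\|_F\le\|X\|_F\|Y\|_F$), whereas you split the sum first and use $\|XY\|_F\le\|X\|_F\|Y\|_2$ on each rank-one piece; both yield $L_{\mathcal{K}}=dh^2$.
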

\begin{proof}
	The gradient
	\begin{equation*}
	\begin{aligned}
	\nabla_{\mathcal{K}} \mathcal{F}(\mathcal{W},\mathcal{K}) = \frac{1}{N} \sum_{i=1}^{N}(\mathcal{K} \psi(\mathcal{W}x_i) - \psi(\mathcal{W}x_{i+1}))\psi(\mathcal{W}x_i)^T, 
	\end{aligned}
	\end{equation*}
	and 
	\begin{equation*}
	\begin{aligned}
	&~~~~\|\nabla_{\mathcal{K}} \mathcal{F}(\mathcal{W}, \mathcal{K}^1) - \nabla_{\mathcal{K}} \mathcal{F}(\mathcal{W}, \mathcal{K}^2) \|_F \\  
	&= \frac{1}{N}\|\sum_{i=1}^{N} (\mathcal{K}^1 - \mathcal{K}^2) \psi(\mathcal{W}x_i)\psi(\mathcal{W}x_i)^T\|_F \\
	&\leq \frac{1}{N} \|\mathcal{K}^1 - \mathcal{K}^2\|_F \|\sum_{i=1}^{N}\psi(\mathcal{W}x_i)\psi(\mathcal{W}x_i)^T\|_F\\
	&\leq  \frac{1}{N}\|\mathcal{K}^1 - \mathcal{K}^2\|_F \sqrt{d^2(Nh^2)^2}\\
	&= dh^2\|\mathcal{K}^1 - \mathcal{K}^2\|_F.
	\end{aligned}
	\end{equation*}
\end{proof}
With Lemmas \ref{lemma:1} and \ref{lemma:2}, we now show that  Algorithm \ref{algo:0} will converge to a critical point with convergence rate $O(\frac{1}{T})$ or $O(\frac{1}{\log T})$. 

\begin{theorem} \label{theorem:1}
	Under Assumptions \ref{Assumption:1} and \ref{Assumptipon:2}, Algorithm \ref{algo:0}  for the Koopman operator learning will converge to a critical point. With constant learning rate $\eta \leq \min(\frac{2}{L_{\mathcal{W}}},\frac{2}{L_{\mathcal{K}}})$, its convergence rate is $O(\frac{1}{T})$; and with diminishing learning rate $\eta_t = \frac{1}{t+1}$, its convergence rate is $O(\frac{1}{\log T})$. 
\end{theorem}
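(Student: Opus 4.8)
The plan is to treat Algorithm~\ref{algo:0} as a two-block gradient-descent scheme and exploit the blockwise Lipschitz smoothness from Lemmas~\ref{lemma:1} and~\ref{lemma:2} through the standard descent lemma, then telescope the resulting per-iteration decrease against the lower bound on $\mathcal{F}$. First I would apply the descent lemma to each of the two updates within one iteration. For the $\mathcal{K}$-step, since $\mathcal{F}(\mathcal{W}^t,\cdot)$ has $L_{\mathcal{K}}$-Lipschitz gradient by Lemma~\ref{lemma:2},
\[
\mathcal{F}(\mathcal{W}^t,\mathcal{K}^{t+1}) \le \mathcal{F}(\mathcal{W}^t,\mathcal{K}^t) - \eta\Big(1-\tfrac{L_{\mathcal{K}}\eta}{2}\Big)\|\nabla_{\mathcal{K}}\mathcal{F}(\mathcal{W}^t,\mathcal{K}^t)\|_F^2,
\]
and for the subsequent $\mathcal{W}$-step, since $\mathcal{F}(\cdot,\mathcal{K}^{t+1})$ has $L_{\mathcal{W}}$-Lipschitz gradient by Lemma~\ref{lemma:1},
\[
\mathcal{F}(\mathcal{W}^{t+1},\mathcal{K}^{t+1}) \le \mathcal{F}(\mathcal{W}^t,\mathcal{K}^{t+1}) - \eta\Big(1-\tfrac{L_{\mathcal{W}}\eta}{2}\Big)\|\nabla_{\mathcal{W}}\mathcal{F}(\mathcal{W}^t,\mathcal{K}^{t+1})\|_F^2.
\]
Chaining these yields $\mathcal{F}_{t+1}\le\mathcal{F}_t - c\,g_t$, where $g_t:=\|\nabla_{\mathcal{K}}\mathcal{F}(\mathcal{W}^t,\mathcal{K}^t)\|_F^2+\|\nabla_{\mathcal{W}}\mathcal{F}(\mathcal{W}^t,\mathcal{K}^{t+1})\|_F^2$ and $c=\eta(1-\tfrac{\eta}{2}\max(L_{\mathcal{K}},L_{\mathcal{W}}))$, which is positive under the stated step-size condition. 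Hence $\mathcal{F}_t$ is monotonically nonincreasing; being bounded below by $0$ and above by $R$, it converges.

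For the constant step size I would telescope this inequality from $t=0$ to $T-1$, obtaining $c\sum_{t=0}^{T-1} g_t \le \mathcal{F}_0-\mathcal{F}_T\le R$, so $\min_{0\le t<T} g_t \le R/(cT)=O(1/T)$, which is the claimed rate for the best iterate. For the diminishing step $\eta_t=1/(t+1)$ the coefficient becomes $c_t=\eta_t(1-\tfrac{\eta_t}{2}\max(L_{\mathcal{K}},L_{\mathcal{W}}))$; since $\eta_t\downarrow 0$, for $t$ large enough $c_t\ge \tfrac{1}{2}\eta_t$, while $\sum_{t}\eta_t=\sum_t 1/(t+1)=\Theta(\log T)$ diverges. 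Telescoping the weighted inequality then gives $\min_{t<T} g_t \le R/\sum_{t=0}^{T-1} c_t = O(1/\log T)$.

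The remaining and most delicate step is to translate the vanishing of $g_t$ into convergence toward a genuine critical point, i.e. into the quantity $\mathcal{E}^t=\|\nabla_{\mathcal{K}}\mathcal{F}_t\|_F+\|\nabla_{\mathcal{W}}\mathcal{F}_t\|_F$ that the algorithm tracks, because the $\mathcal{W}$-gradient controlled above is evaluated at $(\mathcal{W}^t,\mathcal{K}^{t+1})$ rather than at $(\mathcal{W}^t,\mathcal{K}^t)$. The main obstacle is bridging this evaluation-point mismatch. I expect to close it by observing that $\|\mathcal{K}^{t+1}-\mathcal{K}^t\|_F=\eta_t\|\nabla_{\mathcal{K}}\mathcal{F}(\mathcal{W}^t,\mathcal{K}^t)\|_F\to 0$ along the best-iterate sequence, and by invoking the boundedness of the data, of $\mathcal{W}$, and of $\psi$ from Assumptions~\ref{Assumption:1} and~\ref{Assumptipon:2} to bound $\|\nabla_{\mathcal{W}}\mathcal{F}(\mathcal{W}^t,\mathcal{K}^t)-\nabla_{\mathcal{W}}\mathcal{F}(\mathcal{W}^t,\mathcal{K}^{t+1})\|_F$ by a constant multiple of $\|\mathcal{K}^{t+1}-\mathcal{K}^t\|_F$. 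The triangle inequality then forces $\mathcal{E}^t\to 0$ at the same asymptotic rate, and since the tracker retains the iterate $\mathcal{K}^*,\mathcal{W}^*$ of smallest $\mathcal{E}$, the algorithm converges to a critical point, completing the argument.
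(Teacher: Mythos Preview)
Your proposal is correct and follows essentially the same route as the paper: apply the descent lemma to each of the two block updates via Lemmas~\ref{lemma:1} and~\ref{lemma:2}, chain them, telescope, and then bound the minimum squared gradient by $O(1/T)$ (constant step) or $O(1/\log T)$ (diminishing step) using the boundedness $\mathcal{F}\le R$. The one place you go beyond the paper is the final paragraph on the evaluation-point mismatch between $\nabla_{\mathcal{W}}\mathcal{F}(\mathcal{W}^t,\mathcal{K}^{t+1})$ and the tracker quantity $\mathcal{E}^t$; the paper simply asserts $\|\nabla_{\mathcal{K}}\mathcal{F}(\mathcal{W}^*,\mathcal{K}^*)\|_F^2+\|\nabla_{\mathcal{W}}\mathcal{F}(\mathcal{W}^*,\mathcal{K}^*)\|_F^2\to 0$ without bridging that gap, so your extra Lipschitz-in-$\mathcal{K}$ argument is in fact a refinement rather than a deviation.
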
 

\begin{proof}
	Since the objective function is Lipschitz gradient continuous with respect to $\mathcal{K}$, the descent lemma \cite{nesterov2013introductory} can be applied and we have   
	\begin{align}
		&\mathcal{F}(\mathcal{W}^{t},\mathcal{K}^{t+1}) \nonumber \\
		\leq& \mathcal{F}(\mathcal{W}^{t},\mathcal{K}^{t})+ \tr(\nabla_{\mathcal{K}} \mathcal{F}(\mathcal{W}^{t},\mathcal{K}^{t})^{T}(\mathcal{K}^{t+1} - \mathcal{K}^{t}))\nonumber \\
		&~~~~+ \frac{L_{\mathcal{K}}}{2}\|\mathcal{K}^{t+1} \!\!-\! \mathcal{K}^{t}\|_{F}^{2}\nonumber\\
		=& \mathcal{F}(\mathcal{W}^{t},\mathcal{K}^{t}) -\eta_{\mathcal{K}}\tr(\nabla_{\mathcal{K}} \mathcal{F}(\mathcal{W}^{t},\mathcal{K}^{t})^{T}\nabla_{\mathcal{K}} \mathcal{F}(\mathcal{W}^{t},\mathcal{K}^{t})) \nonumber\\
		&~~~~+ \frac{L_{\mathcal{K}}}{2}\|\mathcal{K}^{t+1} \!\!-\! \mathcal{K}^{t}\|_F^{2}\nonumber\\
		=& \mathcal{F}(\mathcal{W}^{t},\mathcal{K}^{t}) + \left(\frac{\eta_{\mathcal{K}}^{2}L_{\mathcal{K}}}{2}-\eta_{\mathcal{K}}\right)\|\nabla_{\mathcal{K}} \mathcal{F}(\mathcal{W}^{t},\mathcal{K}^{t})\|_F^{2}, \label{equ:5}
	\end{align}
	where $\tr$ denotes the trace of the matrix. The first equality is due to the gradient update of $\mathcal{K}^{t}$ and the second equality is by the fact that $\tr(A^TA) = \|A\|_{F}^{2}$.   
	
	 As for the basis function's parameter $\mathcal{W}$, we can have the similar result since the objective function is Lipschitz gradient continuous with respect to $\mathcal{W}$:
		 \begin{align}
			&\mathcal{F}(\mathcal{W}^{t+1},\mathcal{K}^{t+1}) \nonumber\\
			\leq & \mathcal{F}(\mathcal{W}^{t},\mathcal{K}^{t+1}\!) \!+\! \left(\!\frac{\eta_{\mathcal{W}}^{2}L_{\mathcal{W}}}{2}\!-\!\eta_{\mathcal{W}}\!\!\right)\|\nabla_{\mathcal{W}} \mathcal{F}(\mathcal{W}^{t}\!,\!\mathcal{K}^{t+1})\|_F^{2}.\label{equ:6}
		\end{align}
	So by equations \eqref{equ:5} and \eqref{equ:6}, we have the following for each complete update from $(\mathcal{W}^{t},\mathcal{K}^t) \rightarrow (\mathcal{W}^{t+1},\mathcal{K}^{t+1})$: 
	\begin{align}
		&\mathcal{F}(\mathcal{W}^{t+1},\mathcal{K}^{t+1}) \nonumber\\
		\leq & \mathcal{F}(\mathcal{W}^{t},\mathcal{K}^{t}) + 
		\left(\frac{\eta_{\mathcal{K}}^{2}L_{\mathcal{K}}}{2}-\eta_{\mathcal{K}}\right)\|\nabla_{\mathcal{K}} \mathcal{F}(\mathcal{W}^{t},\mathcal{K}^{t})\|_F^{2} \nonumber\\
		 &~~~+ \left(\frac{\eta_{\mathcal{W}}^{2}L_{\mathcal{W}}}{2}-\eta_{\mathcal{W}}\right)\|\nabla_{\mathcal{W}} \mathcal{F}(\mathcal{W}^{t},\mathcal{K}^{t+1})\|_F^{2}. \label{equ:7}
	\end{align} 
	We sum both sides of inequality \eqref{equ:7} from $t = 1,\cdots, T$ and obtain
	\begin{align}
		&\mathcal{F}(\mathcal{W}^{T+1},\mathcal{K}^{T+1}) \nonumber\\
		\leq & \mathcal{F}(\mathcal{W}^{0},\mathcal{K}^{0}) + 
	\sum_{t=0}^{T}	\left(\frac{\eta_{\mathcal{K}}^{2}L_{\mathcal{K}}}{2}-\eta_{\mathcal{K}}\right)\|\nabla_{\mathcal{K}} \mathcal{F}(\mathcal{W}^{t},\mathcal{K}^{t})\|_F^{2} \nonumber\\
		&~~~+ 	\sum_{t=0}^{T}\left(\frac{\eta_{\mathcal{W}}^{2}L_{\mathcal{W}}}{2}-\eta_{\mathcal{W}}\right)\|\nabla_{\mathcal{W}} \mathcal{F}(\mathcal{W}^{t},\mathcal{K}^{t+1})\|_F^{2}.	
	\end{align} 
    \textbf{(1) Constant learning rate}
    
    If we choose the constant stepsize, e.g., $\eta_{\mathcal{W}} = \eta_{\mathcal{K}} =  \eta$, $ 0 < \eta < \min(\frac{2}{L_{\mathcal{W}}}, \frac{2}{L_\mathcal{K}})$ and let $L = \max(\eta_{\mathcal{W}},\eta_{\mathcal{K}})$, $S = \eta - \frac{L\eta^2}{2}$, and we can bound the gradients as follows: 
    \begin{align}
    	&\sum_{t=0}^{T}\|\nabla_{\mathcal{K}} \mathcal{F}(\mathcal{W}^{t},\mathcal{K}^{t})\|_F^{2} + \|\nabla_{\mathcal{W}} \mathcal{F}(\mathcal{W}^{t},\mathcal{K}^{t+1})\|_F^{2} \nonumber\\ 
    	\leq & \frac{(\mathcal{F}(\mathcal{W}^{0},\mathcal{K}^{0}) - \mathcal{F}(\mathcal{W}^{T+1},\mathcal{K}^{T+1}))}{S}. \label{equ:9}
    \end{align}
    One can see that, each term on the right is non-negative and their summation is bounded by some constant. 
    We can conclude that the alternating optimization algorithm will converge asymptotically to one critical point even \textbf{without} optimal tracker $\mathcal{K}^*$ and $\mathcal{W}^*$ when $T \rightarrow \infty.$

    Based on inequality \eqref{equ:9}, one can bound the minimum gradients up to $T$ for Algorithm \ref{algo:0} as follows: 
    \begin{align}
    	&\min_{t= 0,\cdots, T} \|\nabla_{\mathcal{K}} \mathcal{F}(\mathcal{W}^{t},\mathcal{K}^{t})\|_F^{2} + \|\nabla_{\mathcal{W}} \mathcal{F}(\mathcal{W}^{t},\mathcal{K}^{t+1})\|_F^{2} \nonumber\\
    	\leq& \frac{(\mathcal{F}(\mathcal{W}^{0},\mathcal{K}^{0}) - \mathcal{F}(\mathcal{W}^{T+1},\mathcal{K}^{T+1}))}{ST} 	\leq\frac{2R}{ST}.\nonumber 
    \end{align}
	\textbf{(2) Diminishing learning rate}
	
	If we choose the diminishing learning rate, e.g., $\eta^{t} = \frac{1}{t+1}$, the result becomes
	\begin{align}
			&\min_{t= 0,\cdots, T} \|\nabla_{\mathcal{K}} \mathcal{F}(\mathcal{W}^{t},\mathcal{K}^{t})\|_F^{2} + \|\nabla_{\mathcal{W}} \mathcal{F}(\mathcal{W}^{t},\mathcal{K}^{t+1})\|_F^{2} \nonumber\\	
			\leq& \frac{(\mathcal{F}(\mathcal{W}^{0},\mathcal{K}^{0}) - \mathcal{F}(\mathcal{W}^{T+1},\mathcal{K}^{T+1}))}{\sum_{t=0}^{T}\left( \eta^t - \frac{L(\eta^t)^2}{2}     \right)} \nonumber 	.	\end{align}
	We know
	\begin{align}
		&\sum_{t=0}^{T}\left( \eta^t - \frac{L(\eta^t)^2}{2}     \right) = \sum_{t=0}^{T} \left( \frac{1}{t+1} - \frac{L}{2(t+1)^2}   \right) \nonumber\\
		\geq & \ln(T+2) \!-\! \frac{L}{2} \!-\! \sum_{t=1}^{T} \frac{L}{2t(t+1)} = \ln( T\!+\!2) \!-\! L + \frac{L}{2(T+1)}.
	\end{align}
	So for diminishing stepsize, we can obtain 
	\begin{align}
		&\min_{t= 0,\cdots, T} \|\nabla_{\mathcal{K}} \mathcal{F}(\mathcal{W}^{t},\mathcal{K}^{t})\|_F^{2} + \|\nabla_{\mathcal{W}} \mathcal{F}(\mathcal{W}^{t},\mathcal{K}^{t+1})\|_F^{2} \nonumber\\	
		\leq & \frac{2R}{O(\ln T)} \nonumber.
	\end{align}
	
	We see that the constant stepsize has the better convergence rate than the diminishing stepsize with the help of optimal tracker. Both cases show that 
	\begin{align*}
	    	\|\nabla_{\mathcal{K}} \mathcal{F}(\mathcal{W}^{*},\mathcal{K}^{*})\|_F^{2} + \|\nabla_{\mathcal{W}} \mathcal{F}(\mathcal{W}^{*},\mathcal{K}^{*})\|_F^{2} \rightarrow 0.
	\end{align*}
\end{proof}
\section{Distributed Koopman Learning}\label{sect:distributed}
We now develop an algorithm to handle the learning problem for the Koopman operator of high dimensional nonlinear dynamical systems.  Even if there are only a thousand states in the underlying nonlinear system,  the dimension of the dictionary functions explodes exponentially with the number of states.  Memory constraints thus make it infeasible to train a Koopman operator using a centralized or stand-alone computing node.  This motivates the derivation of a scalable, distributed approximation algorithm to relieve this problem.
\begin{assumption} \label{Assup:3}
	The basis function $\psi(x)$ can be decomposed or approximated by $[\psi_1(x^1),\cdots, \psi_q(x_q)]^{\top},$ where $\psi_i: \mathbb{R}^{d_i} \rightarrow \mathbb{R}^{m_i}$ is the new basis function for $x^i$ and $x_i$ is a subset of $x$ with $x = [x^1,\cdots,x^q]^{\top}$. 
\end{assumption}

Based on Assumption \ref{Assup:3}, we can reformulate the centralized Koopman objective function as 

\begin{align}
&\mathcal{F}(\mathcal{W},\mathcal{K}) \nonumber\\
=& \frac{1}{2N}\sum_{i=1}^{N}\left\|\begin{bmatrix}	 
\psi_1(\mathcal{W}_1 x_{i+1}^1) \\
\vdots\\
\psi_q(\mathcal{W}_q x_{i+1}^q)
\end{bmatrix}
\!\!-\!\!
\begin{bmatrix}
\mathcal{K}_{11} \cdots \mathcal{K}_{1q} \\
\vdots ~~\ddots ~~\vdots \\
\mathcal{K}_{q1} \cdots \mathcal{K}_{qq} 
\end{bmatrix}\!\! \begin{bmatrix}
\psi_1(\mathcal{W}_1 x_i^{1}) \\
\vdots\\
\psi_q(\mathcal{W}_q x_i^q)
\end{bmatrix}\right\|_{F}^2 \nonumber
\end{align}

Our distributed Koopman learning's structure is as follows. Denote by $\mathcal{Q} = [1,\cdots,q]$ the set of computation nodes which can communicate with each other. For each computation node $i \in \mathcal{Q}$, it only store part of the data set $\{(x^i_j,x^i_{j+1})|j = 1,\cdots N\}$, its corresponding row and column of Koopman operator $\{\mathcal{K}_{ij},\mathcal{K}_{ji}|j \in \mathcal{Q}\}$ and its basis function $\psi_i$. 

For node $i$, its gradient will compose two parts. The first part can be calculated based on  its own knowledge. Another part needs the information from other nodes. We first define by $e^i \in \mathbb{R}^{m_i}$ the error term for node $i$ with data point $(x_j^{i},x_{j+1}^{i})$, where 
\begin{align}
e_j^{i} = \psi_i(\mathcal{W}_i x_{j+1}^{i}) - \begin{bmatrix}
\mathcal{K}_{i1},\cdots, \mathcal{K}_{iq}
\end{bmatrix}\begin{bmatrix}
\psi_1(\mathcal{W}_1 x_j^{1})\\
\vdots\\
\psi_q(W_q x_j^{q})
\end{bmatrix}, \nonumber
\end{align}
and define by $J(\psi_i(\cdot)) \in \mathbb{R}^{m_i\times d_i}$ the Jacobi matrix of function $\psi_i$, we then have the following distributed Koopman learning algorithm shown in Algorithm \ref{algo:1}.

\begin{algorithm}
	\caption{Distributed Koopman Operator Learning}\label{algo:1}
		\textbf{Initialization:}\\
	\text{~~~node $i$ randomly initilizes its $W_i$, $\mathcal{K}_{ik}~\forall i,k \in \mathcal{Q}$.}\\	
	\While{Not Converge}{ 
		$A_i = 0, B_i = 0, C_i = 0, \forall i \in \mathcal{Q}.$\\
		\For{$j=1;j \leq N;j=j+1$}
		{
			node $i$ calculates $\psi_i(\mathcal{W}_{i} x_j^{i}),\!\psi_i(\mathcal{W}_{i} x_{j+1}^{i})$,$\forall i \in \mathcal{Q}$.	\\
			node $i$ broadcasts $S^i_j = \psi_i(\mathcal{W}_{i} x_j^{i})$,$\forall i \in \mathcal{Q}$.\\
			node $i$ calculates $e_j^i$, sends $S_{iv}^{'} = \mathcal{K}_{iv}^Te_j^i$ to node $v$, $\forall i,v \in \mathcal{Q}$.\\
			$A_i = A_i + J(\psi_i(\mathcal{W}_i x_{j+1}^{i}))^{\top}e_j^{i}, \forall i \in \mathcal{Q}$\\
			$B_i = B_i - J(\psi_i(\mathcal{W}_i x_j^i))^{\top}\sum_{k \in \mathcal{Q}}S_{ki}^{'},\forall i \in \mathcal{Q}$.\\
			$C_i = C_i + e_j^i \vect $\footnotemark$ [S_j^1,\cdots,S_j^q]$
		}
		\For{$i \in \mathcal{Q}$ }{
			$\mathcal{W}_{i} \leftarrow \mathcal{W}_{i} - \eta_{\mathcal{W}_i}\frac{1}{N}(A_i+B_i)$.\\
			$[\mathcal{K}_{i1}, \cdots, \mathcal{K}_{iq}] \leftarrow [\mathcal{K}_{i1}, \cdots, \mathcal{K}_{iq}] - \frac{\eta_{\mathcal{K}_i}}{N}C_i$  
		}
	}	
\end{algorithm}
\footnotetext{$\vect$ means vectorization of matrix which converts the matrix into a column vector. }
For our distributed Koopman operator learning Algorithm $\ref{algo:1}$, line 6-8 is the communication stage, each computation node $i$ calculates its result in the lifted dimensional space and then broadcast it over a communication network. After the communication, the information is enough to compute local error term $e_{j}^{i}$, and node $i$ send $S_{iv}^{'}$ to node $v$ (line 8). Here the communication stage ends and computation stage (line 9-11) begins. $A_i,B_i,C_i$ will sum up all the information for each data point. The last is update stage with gradient descent method (line 14-15). Based on this distributed algorithm and Assumption \ref{Assup:3}, we can prove this is equivalent to the centralized gradient descent algorithm. 
\begin{lemma} \label{theorem:2}
Under Assumption \ref{Assup:3}, the distributed Koopman learning in  Algorithm \ref{algo:1} is equivalent to the following update:
	\begin{equation}
	\begin{aligned} \label{equ:11}
		\mathcal{K}^{t+1} &= \mathcal{K}^{t} - \eta_{\mathcal{K}} \nabla_{\mathcal{K}} \mathcal{F}(\mathcal{W}^t, \mathcal{K}^t), \\
		\mathcal{W}^{t+1} &= \mathcal{W}^{t} - \eta_{\mathcal{W}} \nabla_{\mathcal{W}} \mathcal{F}(\mathcal{W}^t, \mathcal{K}^{k}).
	\end{aligned}
	\end{equation}  
\end{lemma}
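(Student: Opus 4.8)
The plan is to show that the three accumulators $A_i$, $B_i$, $C_i$ produced by Algorithm \ref{algo:1} are exactly $N$ times the corresponding blocks of the centralized gradients $\nabla_{\mathcal{W}}\mathcal{F}$ and $\nabla_{\mathcal{K}}\mathcal{F}$, so that the per-node updates in lines 14--15 reassemble into the centralized steps \eqref{equ:11}. First I would rewrite the decomposed objective of Assumption \ref{Assup:3} in terms of the error blocks, observing that $e_j^i$ is precisely the $i$-th block of the global residual $r_j := \Psi(x_{j+1}) - \mathcal{K}\,\Psi(x_j)$, where $\Psi(x) = [\psi_1(\mathcal{W}_1 x^1)^\top, \ldots, \psi_q(\mathcal{W}_q x^q)^\top]^\top$. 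Hence $\mathcal{F} = \frac{1}{2N}\sum_{j=1}^N \sum_{i=1}^q \|e_j^i\|_2^2$, which cleanly separates the global loss into the per-node squared errors that each node can evaluate once the broadcast $S_j^i = \psi_i(\mathcal{W}_i x_j^i)$ has been exchanged.

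The $\mathcal{K}$-block is the easy direction. Since $\mathcal{K}_{il}$ enters $\mathcal{F}$ only through $e_j^i$, differentiating $\frac{1}{2}\|e_j^i\|_2^2$ gives $\nabla_{\mathcal{K}_{il}}\mathcal{F} = -\frac{1}{N}\sum_j e_j^i\,\psi_l(\mathcal{W}_l x_j^l)^\top$; stacking over $l \in \mathcal{Q}$ and using the $\vect$ bookkeeping identifies the row-block gradient with $\frac{1}{N}C_i$, so line 15 is exactly the $\mathcal{K}$-step of \eqref{equ:11}. Care is needed only to match signs and the block ordering induced by $\vect[S_j^1,\ldots,S_j^q]$.

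The $\mathcal{W}$-block is where the coupling --- and the main difficulty --- appears. The observation to exploit is that $\mathcal{W}_i$ enters $\mathcal{F}$ in two qualitatively different ways: (a) through the ``next-step'' term $\psi_i(\mathcal{W}_i x_{j+1}^i)$, which sits only in node $i$'s own error block $e_j^i$; and (b) through $\psi_i(\mathcal{W}_i x_j^i)$, which, because it occupies the $i$-th slot of $\Psi(x_j)$, is multiplied by the entire $i$-th column $\{\mathcal{K}_{ki}\}_{k\in\mathcal{Q}}$ of $\mathcal{K}$ and therefore appears inside \emph{every} node's error block $e_j^k$. Applying the chain rule through the Jacobian $J(\psi_i(\cdot))$ to (a) yields exactly the locally computable accumulator $A_i = \sum_j J(\psi_i(\mathcal{W}_i x_{j+1}^i))^\top e_j^i$, while (b) yields a term proportional to $J(\psi_i(\mathcal{W}_i x_j^i))^\top \sum_{k\in\mathcal{Q}} \mathcal{K}_{ki}^\top e_j^k$. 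The key step is recognizing that $\sum_{k\in\mathcal{Q}} \mathcal{K}_{ki}^\top e_j^k = \sum_{k\in\mathcal{Q}} S'_{ki}$ is precisely the quantity node $i$ receives in the communication step (line 8), so the coupling contribution is realizable with neighbor messages alone and equals $B_i$. Together $\frac{1}{N}(A_i+B_i) = \nabla_{\mathcal{W}_i}\mathcal{F}$, matching line 14.

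The hardest part is this coupling term $B_i$: differentiating with respect to $\mathcal{W}_i$ touches all $q$ error blocks, since $\psi_i(\mathcal{W}_i x_j^i)$ propagates through the $i$-th column of $\mathcal{K}$ into every $e_j^k$. One must correctly separate the next-step contribution $\psi_i(\mathcal{W}_i x_{j+1}^i)$, which belongs solely to $e_j^i$ and forms $A_i$, from the current-step contributions that gather across all $k\in\mathcal{Q}$ (including the diagonal $k=i$) into $B_i$, and then verify that the transposition through $\mathcal{K}_{ki}$ and the summation over $k$ reproduce exactly the aggregate $\sum_{k\in\mathcal{Q}} S'_{ki}$ received from neighbors, with no double counting between $A_i$ and $B_i$. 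Once both gradient blocks are matched, stacking the node-wise $\mathcal{W}_i$ and row-block $\mathcal{K}$ updates over $i\in\mathcal{Q}$ reassembles $\nabla_{\mathcal{W}}\mathcal{F}$ and $\nabla_{\mathcal{K}}\mathcal{F}$, establishing equivalence to \eqref{equ:11}.
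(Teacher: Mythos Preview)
Your proposal is correct and follows essentially the same approach as the paper: write out what the accumulators $A_i$, $B_i$, $C_i$ sum to over the data points, identify $\frac{1}{N}(A_i+B_i)$ with the $\mathcal{W}_i$-block of $\nabla_{\mathcal{W}}\mathcal{F}$ and $\frac{1}{N}C_i$ with the $i$-th row-block of $\nabla_{\mathcal{K}}\mathcal{F}$, and conclude that lines~14--15 reproduce \eqref{equ:11}. In fact your write-up is more explicit than the paper's proof, which simply records the accumulated expressions and asserts the match; your separation of the two routes through which $\mathcal{W}_i$ enters the loss (via $\psi_i(\mathcal{W}_i x_{j+1}^i)$ in $e_j^i$ alone, and via $\psi_i(\mathcal{W}_i x_j^i)$ through the $i$-th column of $\mathcal{K}$ into every $e_j^k$) is exactly the content needed to justify that assertion.
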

\begin{proof}
	Based on line 9-11 in Algorithm \ref{algo:1}, one can verify the following after updating all the data points: 
	\begin{align*}
		A_i &= \sum_{j=1}^{N} J(\psi_i(\mathcal{W}_ix_{j+1}^{i}))^{\top}e_j^{i},\\
		B_i &= -\sum_{j}^{N}J(\psi_i(\mathcal{W}_i x^i_{j}))^{\top}\sum_{k \in \mathcal{Q}}\mathcal{K}_{iv}^Te_j^i,\\
		C_i & = \sum_{j}^{N}e_j^i \vect[S_j^1,\cdots,S_j^q].
	\end{align*}
Compared to gradients of $\mathcal{F}(\mathcal{W},\mathcal{K})$, we can find that 
\begin{align*}
	\frac{1}{N}(A_i + B_i) &= \nabla_{\mathcal{W}} \mathcal{F}(\mathcal{W}^t, \mathcal{K}^{k}),\\
	\frac{1}{N}C_i & = \nabla_{\mathcal{K}} \mathcal{F}(\mathcal{W}^t, \mathcal{K}^t).
\end{align*}
So the update stage (line 14-15) is the same with equation \eqref{equ:11}, which finishes the proof.
\end{proof}
\begin{remark}
	Our alternating Koopman operator learning (Algorithm \ref{algo:0}) can be regarded as nonlinear Gauss-Seidel iterations \cite{vrahatis2003linear}, while our distributed Koopman operator learning lies in the model of nonlinear Jacobi iteration \cite{vrahatis2003linear}. Here we choose nonlinear Jacobi iteration for distributed Koopman operator learning due to that nonlinear Jacobi iteration is \textbf{(1)} suitable for parallel computation and \textbf{(2)} with less communication overhead.     
\end{remark}

By lemma \ref{theorem:2}, we have the following convergence result for our distributed Koopman operator learning. 
\begin{theorem}
	Under Assumption \ref{Assumption:1}, \ref{Assumptipon:2} and \ref{Assup:3}, the distributed Koopman operator learning based on Algorithm \ref{algo:1} will converge to a critical point asymptotically.
\end{theorem}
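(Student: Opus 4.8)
The plan is to reduce the distributed algorithm to a single gradient-descent recursion via Lemma \ref{theorem:2}, and then replay the descent argument of Theorem \ref{theorem:1} with the one modification forced by the Jacobi structure. First I would invoke Lemma \ref{theorem:2} to replace Algorithm \ref{algo:1} by the equivalent update \eqref{equ:11}, in which both the $\mathcal{K}$-block and the $\mathcal{W}$-block are moved along their negative gradients evaluated at the \emph{same} iterate $(\mathcal{W}^t,\mathcal{K}^t)$. Stacking the two blocks, this is exactly one step of gradient descent on $\mathcal{F}$ with respect to the joint variable $(\mathcal{W},\mathcal{K})$, so convergence will follow from a standard sufficient-decrease argument once joint smoothness of $\mathcal{F}$ is in hand.

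The crucial difference from Theorem \ref{theorem:1} is that the alternating (Gauss-Seidel) update there evaluated $\nabla_{\mathcal{W}}\mathcal{F}$ at the freshly updated $\mathcal{K}^{t+1}$, which let the two block inequalities \eqref{equ:5} and \eqref{equ:6} chain cleanly. Here both gradients use the stale $\mathcal{K}^t$, so that chaining is unavailable and I instead need joint Lipschitz smoothness of $\mathcal{F}$. Concretely, I would extend Lemmas \ref{lemma:1} and \ref{lemma:2} with the two missing cross-block bounds --- the Lipschitz dependence of $\nabla_{\mathcal{W}}\mathcal{F}$ on $\mathcal{K}$ and of $\nabla_{\mathcal{K}}\mathcal{F}$ on $\mathcal{W}$ --- each of which is again finite because Assumption \ref{Assumption:1} bounds $\psi,\psi',\psi''$ and Assumption \ref{Assumptipon:2} bounds $\mathcal{K}$ and $\mathcal{W}$. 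Combining these with $L_{\mathcal{W}}$ and $L_{\mathcal{K}}$ yields a single constant $L$ for which the joint gradient $(\nabla_{\mathcal{W}}\mathcal{F},\nabla_{\mathcal{K}}\mathcal{F})$ is $L$-Lipschitz.

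With joint $L$-smoothness established, the descent lemma gives, for a common constant stepsize $\eta<2/L$,
\begin{align*}
\mathcal{F}(\mathcal{W}^{t+1},\mathcal{K}^{t+1}) \leq \mathcal{F}(\mathcal{W}^{t},\mathcal{K}^{t}) - \left(\eta - \tfrac{L\eta^2}{2}\right)\left(\|\nabla_{\mathcal{W}}\mathcal{F}(\mathcal{W}^t,\mathcal{K}^t)\|_{F}^2 + \|\nabla_{\mathcal{K}}\mathcal{F}(\mathcal{W}^t,\mathcal{K}^t)\|_{F}^2\right),
\end{align*}
the exact analogue of \eqref{equ:7}. Summing from $t=0$ to $T$ and using that $\mathcal{F}$ is bounded ($0\le\mathcal{F}\le R$) makes the telescoped right-hand side finite, so $\sum_{t\ge 0}\big(\|\nabla_{\mathcal{W}}\mathcal{F}(\mathcal{W}^t,\mathcal{K}^t)\|_{F}^2+\|\nabla_{\mathcal{K}}\mathcal{F}(\mathcal{W}^t,\mathcal{K}^t)\|_{F}^2\big)<\infty$; hence both gradient norms tend to zero and every limit point of the iterates is a critical point, which is the claimed asymptotic convergence. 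The diminishing-stepsize case carries over verbatim from the series estimate already used in Theorem \ref{theorem:1}.

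I expect the main obstacle to be the second step --- upgrading the block-wise Lipschitz estimates of Lemmas \ref{lemma:1} and \ref{lemma:2} to a genuine joint Lipschitz constant. This is where the Jacobi iteration truly departs from the Gauss-Seidel analysis: one must verify that the cross-partial terms do not blow up, and it is precisely here that the boundedness of $\psi$ and its derivatives, together with the parameter bounds of Assumption \ref{Assumptipon:2}, do the essential work. Everything after that is a direct transcription of the summation argument already carried out for Theorem \ref{theorem:1}.
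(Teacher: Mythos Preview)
Your proposal is correct and follows the same high-level route as the paper: invoke Lemma~\ref{theorem:2} to rewrite Algorithm~\ref{algo:1} as the centralized update~\eqref{equ:11}, then recycle the descent argument of Theorem~\ref{theorem:1}. The paper's own proof is in fact the two-line version of this---it simply declares that \eqref{equ:11} is Algorithm~\ref{algo:0} without the tracker and points to \eqref{equ:9}. You are, however, more careful than the paper on one point: you correctly observe that \eqref{equ:11} is a \emph{Jacobi} update (both block gradients taken at $(\mathcal W^t,\mathcal K^t)$), whereas the chaining of \eqref{equ:5}--\eqref{equ:6} in Theorem~\ref{theorem:1} exploits the Gauss--Seidel structure (the $\mathcal W$-gradient is evaluated at the fresh $\mathcal K^{t+1}$). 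Your remedy---upgrading Lemmas~\ref{lemma:1} and~\ref{lemma:2} with the two cross-block Lipschitz estimates to obtain a joint smoothness constant---is the right way to close this gap, and those cross bounds are indeed immediate from Assumptions~\ref{Assumption:1} and~\ref{Assumptipon:2}. The paper silently elides this step; your version is the more honest argument, and what it buys is a genuine sufficient-decrease inequality at $(\mathcal W^t,\mathcal K^t)$ rather than the mixed-iterate inequality~\eqref{equ:9}.
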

\begin{proof}
	The equation \eqref{equ:11} is the case without optimal tracker of Algorithm \ref{algo:0}. The proof of theorem \ref{theorem:1} can be applied directly here (equation \eqref{equ:9}).
\end{proof}

	The advantages of the distributed Koopman learning over the centralized one are not only the scalability, e.g., the ability to handle the high dimensional nonlinear dynamics, but also the feasibility to adjust to different complexity of the partial observations. For example, if one partial observation $(x_j^i,x_{j+1}^i)$ is with complexity dynamic, we can increase the number of basis function.

On the other hand, Algorithm \ref{algo:1} for distributed Koopman learning is under the ideal synchronous model. Although the computation of each node $i$ is parallel, the computation will not start until the broadcast process finishes. This can lead to significant inefficiency in the distributed Koopman learning when, e.g., one node has very limited  communication capability so that all other nodes need wait for this node. Also, one packet loss will lead to all nodes waiting for the resending. However, it is easily to extend Algorithm \ref{algo:1} to handle asynchronous mode as shown in Algorithm \ref{algo:2}. Each node will store the received information $(S_j^{i}, S_{iv}^{'})$ in its memory, updating it when new information arrives. Once the computation node comes to computation stage, it will directly use the information stored in the memory instead of waiting for the newest information.
\begin{algorithm}
	\caption{Asynchronous Distributed Koopman Learning}\label{algo:2}
	
	lines 1-7 of Algorithm \ref{algo:1}.\\
			node $i$ calculates $e_j^i$ based on the current $S_l$ in the memory, sends $S_{iv}^{'} = \mathcal{K}_{iv}^Te_j^i$ to node $v$, $\forall i,v,l \in \mathcal{Q}$.\\
			$A_i \leftarrow A_i + J(\psi_i(\mathcal{W}_i x_{j+1}^{i})^{\top}e_i, \forall i \in \mathcal{Q}$\\
			$B_i \leftarrow B_i - J(\psi_i(\mathcal{W}_i x_j^i);)^{\top}\sum_{k \in \mathcal{Q}}S_{ki}^{'}$  based on the current $S_{ki}$ in the memory, $\forall i \in \mathcal{Q}$.\\
			$C_i \leftarrow C_i + e_j^i \vect[S_1,\cdots,S_q]$ based on the current $S_l$ in the memory, $\forall l \in \mathcal{Q}$.\\
	lines 13-16 of Algorithm \ref{algo:1}.
\end{algorithm}

For the asynchronous version of gradient descent algorithm, existing work such as \cite{low1999optimization} (Theorem 2), \cite{bertsekas1989parallel} (Proposition 2.1), and \cite{liu2017proportional} (Theorem 7)  shows that the synchronous and asynchronous algorithms will converge to the same point once as long as  communication delay is bounded. Their proof applies to our problem with slight modification. 
\begin{lemma}{(\cite{low1999optimization},\cite{bertsekas1989parallel},\cite{liu2017proportional})}
	If the communication delay is bounded by some constant, with small enough stepsize, the asynchronous algorithm \ref{algo:2} will asymptotically converge to the same point as the synchronous one.
\end{lemma}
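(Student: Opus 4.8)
The plan is to follow the standard partial-asynchronism analysis of the cited works and graft it onto the descent argument already used in Theorem \ref{theorem:1}. First I would formalize the asynchronous model: for each node $i$ let $\tau^i_k(t)$ denote the iteration index of the most recent information from node $k$ that node $i$ has in memory at update time $t$, so that the broadcast quantities $S^k$ and the back-communicated products $S'_{ki}$ that node $i$ reads are evaluated at the stale parameters $(\mathcal{W}^{\tau^i_k(t)},\mathcal{K}^{\tau^i_k(t)})$ rather than at the current $(\mathcal{W}^t,\mathcal{K}^t)$. The bounded-delay hypothesis is exactly the statement that there is a constant $B$ with $0 \le t - \tau^i_k(t) \le B$ for all $i,k,t$. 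Under this model, and invoking the equivalence of Lemma \ref{theorem:2} between the distributed updates and centralized gradient descent, the asynchronous step of Algorithm \ref{algo:2} can be written as gradient descent on $\mathcal{F}$ with a perturbed gradient $\nabla\mathcal{F}(\mathcal{W}^t,\mathcal{K}^t) + \epsilon^t$, where $\epsilon^t$ collects the gap between the stale and current values of the terms $e^i_j$ and $S^k$ entering $A_i,B_i,C_i$.

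The key step is to bound the perturbation $\epsilon^t$. Since $e^i_j$ and the lifted states $S^k$ are compositions of $\psi$ with the linear maps, the gradient-Lipschitz estimates of Lemmas \ref{lemma:1} and \ref{lemma:2}, together with the boundedness of Assumption \ref{Assumptipon:2}, yield
\[
\|\epsilon^t\|_F \le L\sum_{s=t-B}^{t-1}\big(\|\mathcal{W}^{s+1}-\mathcal{W}^s\|_F + \|\mathcal{K}^{s+1}-\mathcal{K}^s\|_F\big),
\]
for a constant $L$ assembled from $L_{\mathcal{W}},L_{\mathcal{K}},U_{\mathcal{K}},U_{\mathcal{W}}$. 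Because each one-step increment is $\eta$ times a bounded gradient, this says $\|\epsilon^t\|_F$ is $O(\eta)$ times the recent gradient norms. I would then repeat the descent computation \eqref{equ:5}--\eqref{equ:7} verbatim but carrying the extra cross term $\eta\,\tr\!\big((\nabla\mathcal{F})^\top\epsilon^t\big)$; Young's inequality splits this into a piece absorbed by the existing $-\eta\|\nabla\mathcal{F}\|_F^2$ descent and a residual controlled by the recent increments. Summing over $t$ and reindexing the delayed terms — each increment appears in at most $B$ of the error bounds — telescopes to an inequality of the same shape as \eqref{equ:9}, provided the net coefficient multiplying $\sum_t\|\nabla\mathcal{F}\|_F^2$ stays strictly negative.

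The main obstacle is precisely that last coefficient. The cross term contributes a factor proportional to $B\eta$, so the descent survives only when $\eta$ is small enough that $1 - cB\eta > 0$ for the constant $c$ built from the Lipschitz and boundedness constants; this is exactly where the ``small enough stepsize'' hypothesis is consumed, and it is the crux of the partial-asynchronism argument. Once that strict-descent inequality is secured, summability of $\sum_t\|\nabla\mathcal{F}(\mathcal{W}^t,\mathcal{K}^t)\|_F^2$ follows as in Theorem \ref{theorem:1}, forcing $\|\nabla\mathcal{F}\|_F\to 0$, so the asynchronous iterates approach the same set of critical points of $\mathcal{F}$ as the synchronous Algorithm \ref{algo:1}. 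This reproduces, with only the bookkeeping changes above, the convergence proofs of \cite{low1999optimization,bertsekas1989parallel,liu2017proportional}.
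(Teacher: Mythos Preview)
Your proposal is correct and, in fact, goes well beyond what the paper itself provides. The paper does not prove this lemma at all: it simply attaches the citations \cite{low1999optimization,bertsekas1989parallel,liu2017proportional} to the statement and, in the surrounding text, remarks that ``their proof applies to our problem with slight modification.'' No descent inequality, no perturbation bound, no stepsize condition is worked out in the paper.

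What you have written is exactly the ``slight modification'' the paper alludes to but omits: the standard partial-asynchronism argument of Bertsekas--Tsitsiklis specialized to the smooth structure established in Lemmas \ref{lemma:1}--\ref{lemma:2}. Your formalization of the stale indices $\tau^i_k(t)$, the rewriting of the asynchronous step as a perturbed gradient step via Lemma \ref{theorem:2}, the Lipschitz bound on $\epsilon^t$ in terms of the last $B$ increments, and the requirement $1 - cB\eta > 0$ for the net descent coefficient are all the right ingredients. The only caveat worth flagging is that, for a nonconvex objective, both the synchronous and asynchronous iterates are shown to approach the \emph{set} of critical points of $\mathcal{F}$; ``converge to the same point'' in the lemma's wording should be read in that sense, which you already note in your final paragraph.
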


\section{Experiments}\label{sect:experiment}
We now evaluate the performance of the proposed alternating optimization and distributed algorithms. For each experiment, we sample some points from the real trajectory to prepare the training set and prediction set. Note that our prediction phase is multi-step prediction, i.e., given one initial state, our algorithm will predict the following trajectory using the $\mathcal{K}$-invariant property: $\psi(x_n) = \mathcal{K}^n \psi(x_0).$ 

To evaluate the performance of alternating optimization, we consider Van der Pol oscillator shown in Example \ref{exam:1}. 
\begin{example}{Van der Pol oscillator} \label{exam:1}
    \begin{align}
    \dot{x}_{1} &= \mu \left( x_1 - \frac{1}{3}x_1^3 - x_2\right) \\
    \dot{x_2}  &= \frac{1}{\mu}x_1
\end{align}
\end{example}

In this example, we choose $\mu = 0.5$. The number of date points we sampled is 600 with 400 for training and 200 for prediction. We construct a very simple network with one layer and 3 dimensions to learn the pattern of Van der Pol oscillator. The total training time is around 1.08s (i7-8700 CPU @ 3.20 GHz, 8 GB RAM) with 500 iterations and constant stepsize is 0.23. Fig. \ref{fig:1} shows the multi-step prediction result with alternating optimization method. One step prediction error is around $0.16\%$ and 200 step prediction error is around $1.89\%$.
\begin{figure}[ht!]
	\includegraphics[scale = 0.35]{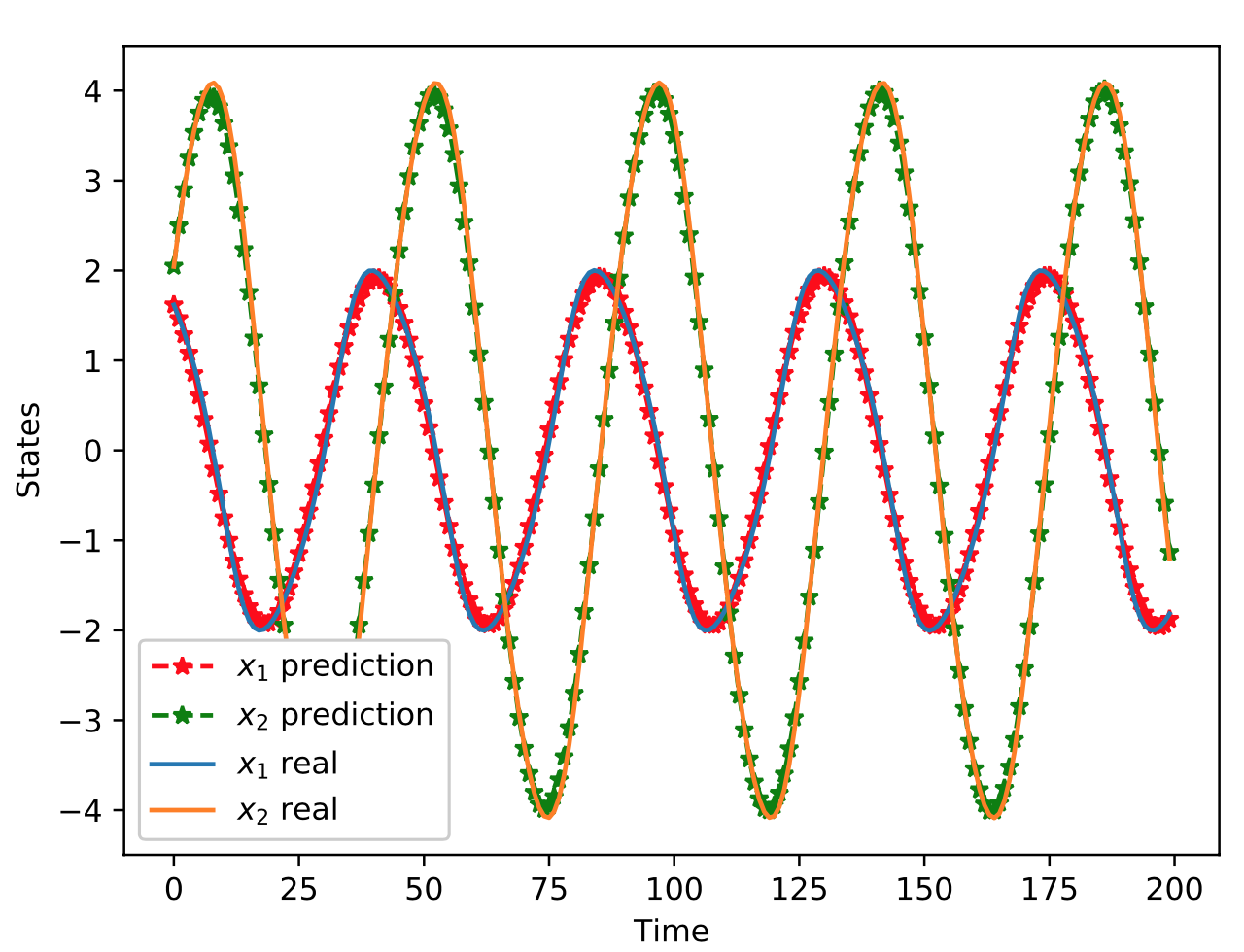}
	\caption{Alternating optimization for centralized Koopman operator learning with Van der Pol oscillator. In this experiments, only the points at time 0 are given. All the data points [1-200] are our predictions with Koopman learning.}
	\label{fig:1}
\end{figure}

\begin{example}{Glycolytic pathway} \label{exam:2}
\begin{align}
    \dot{x_1} &= J - \frac{k_1 x_1 x_6}{1+\left(\frac{x_6}{k_1}\right)^{q}} \\
    \dot{x_2} &= \frac{2k_1 x_1 x_6}{1+\left(\frac{x_6}{k_1}\right)^{q}} - k_2x_2(n-x_5) - k_6 x_2 x_5 \\
    \dot{x_3} &= k_2 x_2(n-x_5) - k_3 x_3(a-x_6)\\
    \dot{x_4} &= k_3 x_3(a-x_6) - k_4 x_4 x_5 - \kappa(x_4 - x_7)\\
    \dot{x_5} &= k_2 x_2(n-x_5) - k_4 x_4 x_5 - k_6 x_2 x_5\\
    \dot{x_6} &= -\frac{2k_1 x_1 x_6}{1+\left(\frac{x_6}{k_1}\right)^{q}} + 2k_3 x_3(a-x_6) - k_5 x_6\\
    \dot{x_7} &= \phi \kappa (x_4 - x_7) - k x_7
\end{align}
\end{example}

\begin{figure*}
    \centering
	\includegraphics[width=0.8\textwidth]{./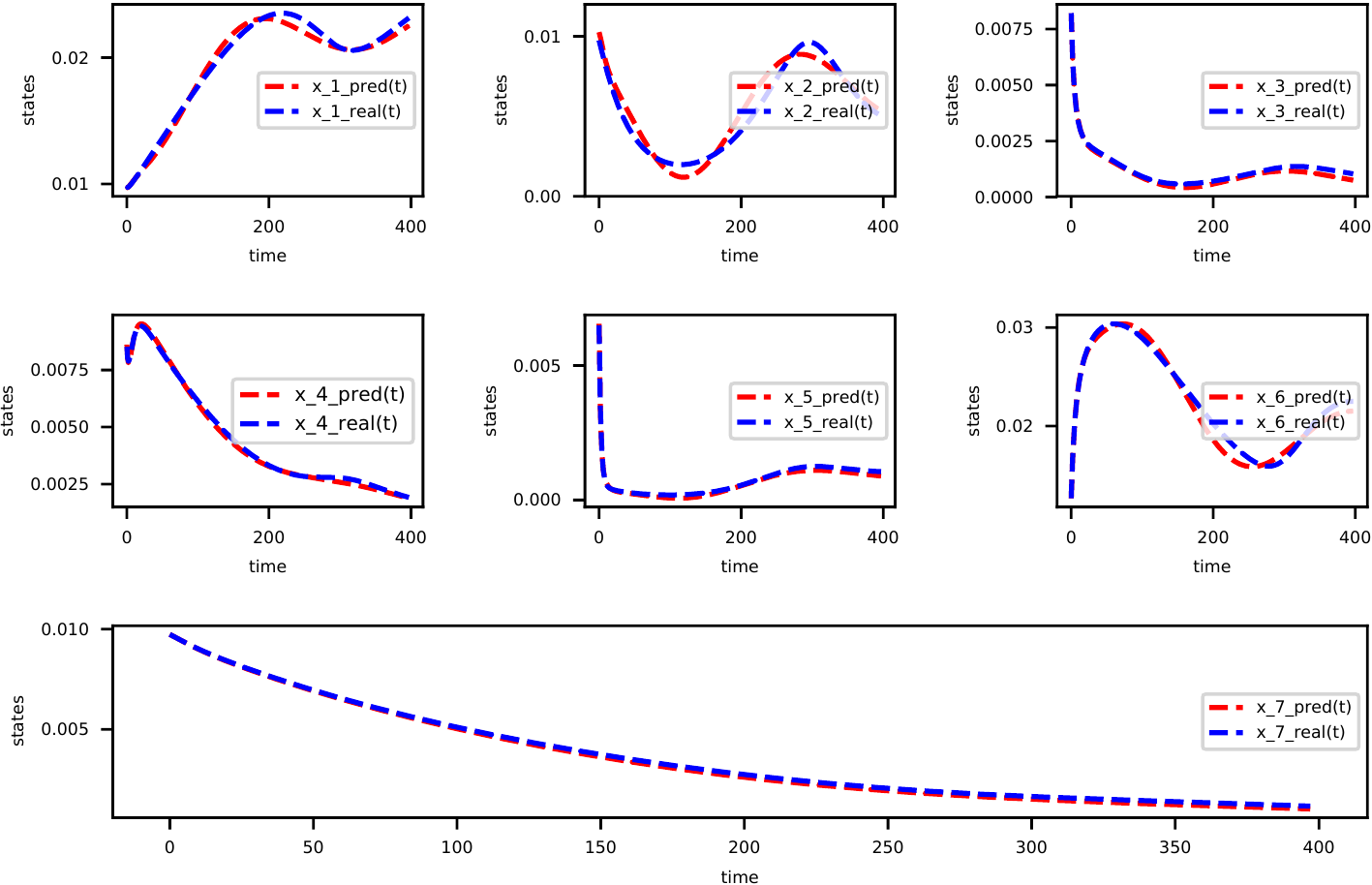}
	\caption{Distributed Koopman learning for Glycolytic pathway.}
	\label{fig:2}
\end{figure*} 

Our distributed Koopman operator learning is implemented on a larger nonlinear dynamical system shown in Example \ref{exam:2}, namely the glycolysis network from cellular biology \cite{daniels2015efficient}. We adopt the parameter setting: $J = 2.5, a= 4, n = 1, k_1 = 0.52, \kappa = 13, \phi = 0.1, q =4, k = 1.8, k_1 = 100, k_2 = 6, k_3 = 16, k_4 = 100, k_5 = 1.28, k_6 = 12$ from \cite{daniels2015efficient}. 1000 data points are sampled from the real trajectory with 600 points for training and 400 for prediction. We create 7 threads to simulate the distributed learning and each thread only learn the dynamic pattern of one state by a simple 3-layer neural network with 15 dimensions. The total training time is 400.4s with 10000 iterations. Results for each state is shown in Fig. \ref{fig:2}. One step error is around 0.02\% and 400 step prediction error is around 2.7\%.

We see that our alternating optimization and distributed algorithms both achieve good performance with multi-step prediction.  Even though partial state measurements are provided for training, the trained distributed Koopman operator is able to predict the behavior of the glycolysis network over 400 steps.  Further, these results provide a glimmer of hope for whole cell network modeling, using strategically placed reporter libraries that provide partial measurements of an entire transcriptome \cite{hasnain2019optimal,ward2009}.

\section{Conclusion}  \label{sect:conclusion}
We have proposed an alternating optimization algorithm to the nonconvex Koopman operator learning problem for nonlinear dynamic systems. We prove that the proposed algorithm converges to a critical point with rate $O(1/T)$ and $O(\frac{1}{\log T})$ for the constant and diminishing learning rates, respectively, under some mild conditions. To cope with the high dimensional nonlinear dynamical systems, we have further proposed a distributed Koopman operator learning algorithm with an appropriate communication mechanism. We show that the distributed Koopman operator learning is of the same convergence property with the centralized one if the basis functions are decomposable. Numerical experiments are provided to complement our theoretical results.

\section{Acknowledgments}
We thank Igor Mezic, Nathan Kutz, Robert
Egbert, Bassam Bamieh, Sai Nandanoori Pushpak, Sean Warnick, Jongmin Kim, Umesh Vaidya, and Erik Bollt for stimulating conversations. Any opinions,
findings and conclusions or recommendations expressed in
this material are those of the author(s) and do not necessarily
reflect the views of the Defense Advanced Research Projects
Agency (DARPA), the Department of Defense, or the United
States Government. This work was supported partially by
a Defense Advanced Research Projects Agency (DARPA)
Grant No. DEAC0576RL01830 and an Institute of Collaborative Biotechnologies Grant.

\bibliographystyle{unsrt}
\bibliography{reference}

\end{document}